\title{Implementing $\Diamond P$ with Bounded Messages on a Network of ADD Channels}
\author{Saptaparni Kumar, Jennifer Welch. \\ {\em{\small Texas A\&M University, College Station, Texas- 77840}}}
\newcommand{\myfontsize}{\fontsize{8}{9}\selectfont}
\newtheorem{observation}{Observation}
\newtheorem{definition}{Definition}
\newtheorem{theorem}{Theorem}
\newtheorem{lemma}{Lemma}
\begin{document}
\maketitle

\begin{abstract}
We present an implementation of the eventually perfect failure detector ($\Diamond P$) from the original hierarchy of the Chandra-Toueg~\cite{Chandra96} oracles on an arbitrary partitionable  network composed of unreliable channels that can lose and reorder messages. Prior implementations of $\Diamond P$ have assumed different partially synchronous models ranging from bounded point-to-point message delay and reliable communication to unbounded message size and known network topologies. We implement $\Diamond P$ under very weak assumptions on an arbitrary, partitionable network 
composed of  {\em Average Delayed/Dropped (ADD)} channels~\cite{Sastry} to model unreliable communication. 
Unlike older implementations, our failure detection algorithm uses bounded-sized messages to eventually detect all nodes that are unreachable (crashed or disconnected) from it.
\end{abstract}

\section{Introduction}

{\em Failure detectors} were proposed by Chandra and Toueg~\cite{Chandra96} as oracles to be used to identify failed nodes in a crash-prone asynchronous message-passing system. Each node has  a failure detector module that can be queried for information about which nodes in the system have crashed. 
Unreliable failure detectors can give wrong information by incorrectly suspecting correct nodes, and/or not suspecting crashed nodes. In spite of that, many oracles are powerful enough to solve important distributed problems that are otherwise  unsolvable in asynchronous systems with even one crash failure. The hierarchy of the Chandra-Toueg oracles~\cite{Chandra96} was originally introduced to circumvent  the FLP~\cite{Fischer85} impossibility result for solving consensus in a crash-prone asynchronous message-passing system,  by identifying crashed nodes and distinguishing them from slow nodes. These failure detectors can be described by their accuracy and completeness properties.
Their implementation in practice requires some degree of partial or even full synchrony.  Paper~\cite{Freiling11} provides a very informative survey on the failure detector abstraction both as building blocks for the design of reliable distributed algorithms and as computability benchmarks.

There are two main lines of research in the area of failure detectors. The first one involves implementing failure detectors on increasingly weaker  system models that represent practical applications and the second one involves finding the weakest failure detector for solving a given problem. We contribute to the first line of research by presenting a novel implementation of an {\em eventually perfect} failure detector ($\Diamond P$) from the original Chandra-Toueg hierarchy 
on an arbitrary partitionable network of unknown topology composed of  ill-behaved channels.  

 In \cite{Aguilera99}, the definitions of accuracy and completeness for failure detectors were extended to partitionable networks. $\Diamond P$ for partitionable networks satisfies {\em strong completeness} and {\em eventual strong accuracy}. Intuitively, $\Diamond P$ can give incorrect information about the nodes in the system for an unknown finite amount of time, after which, it provides perfect information about all nodes in the system. Strong completeness is satisfied if the failure detector of each node eventually suspects all nodes that are unreachable (i.e., crashed or disconnected) from it. Eventual strong accuracy is satisfied if the failure detector of every node  eventually stops suspecting all nodes that are reachable from it. 
The paper by Chen et al.~\cite{Chen02} studies  accuracy and completeness properties (quality of service) of failure detectors and quantifies how fast different implementations of oracles detect failures and how well they avoid false suspicions. Papers~\cite{Renesse98} and~\cite{Hutle} have discussed algorithms to perform failure detection on arbitrary networks composed of unreliable channels using counters as heartbeats for the nodes in the system. Unlike our algorithm, the message sizes in their algorithms are unbounded. Papers~\cite{Fraigniaud16} and \cite{Fernández-Campusano14} do failure detection using bounded sized messages, but unlike our work, they assume that the underlying communication channels are reliable. Our contribution in this paper is a little of both. We present a novel algorithm that implements $\Diamond P$ in an \textbf{arbitrary (partitionable)} network composed of channels that provide very weak guarantees (\textbf{unreliable channels}), using messages that are \textbf{bounded in size}.

The framework of ADD (Average Delayed/Dropped) channels was introduced by Sastry and Pike~\cite{Sastry} who exploited the channel properties to implement $\Diamond P$ on cliques; the properties of an ADD channel are valid only for one-hop networks. The motivation for this paper was based on extending the failure detector from a fully connected network of ADD channels to  any arbitrary partitionable network composed of ADD channels using bounded size messages. 
The ADD channels are a realistic partially synchronous model of ill-behaved channels that can lose and reorder messages. 

Our failure detection algorithm uses bounded sized heartbeats, timeouts and path information to determine if there is a correct  path (all nodes on this path are correct) between two nodes. Periodically, every node sends out its own heartbeat to its neighbors. Every node has an estimated timeout value for its neighbors and if a node does not hear from its neighbor within this estimated time, it suspects the neighbor to be crashed. The timeout value gets incremented every time a neighbor is incorrectly suspected. For a node that is not a neighbor, the algorithm goes through a list of paths from itself to the node and evaluates if the node is reachable via at least one of these paths (i.e., all nodes on a path are correct according to the failure detector). If no such path exists, it suspects this node.

\section{Model and Definitions}

\subsection{Communication Model}
We are considering the original definition~\cite{Sastry} of the Average Delayed/Dropped (ADD) channel in which every link connecting two nodes in the network is composed of two unidirectional ADD channels, one in each direction.
All messages sent on an ADD channel are eventually delivered or lost but not duplicated.
The messages can be partitioned logically into two disjoint sets: privileged and unprivileged. Unprivileged messages have no timing or reliability guarantees and may be arbitrarily delayed or even dropped. However, an ADD channel provides the following guarantee for privileged messages.
For every ADD channel there is an \textbf{unknown} upper bound $d$ on the delay of all privileged messages and a \textbf{unknown} upper bound $r$ on the number of unprivileged messages sent between any consecutive pair of privileged messages. Intuitively this means that out of $r$ consecutive messages sent on an ADD channel at least one message is guaranteed to be delivered within $d$ time.
\subsection{Network Model}

The distributed system consists of a set $\Pi$ of $n$ nodes connected in an arbitrary topology by ADD channels. Nodes may fail only by crashing. Nodes that never crash are called $correct$ nodes and those that have not crashed yet are called $live$ nodes. Each node that $crash$es remains crashed forever. 
Each node knows who its neighbors are. Nodes also know the names (ids) of all the nodes in the system. This assumption is not trivial, as Jimenez et al.~\cite{Jimenez06} show that without this assumption, no failure detector can be implemented, even in a fully synchronous system with reliable links. 

Nodes communicate with neighbors only via local  point-to-point communication. Each node has a local clock which generates ticks at a constant rate. Different clocks can tick at different rates and can be unsynchronized. The network is initially a connected graph but may eventually be partitioned as nodes start crashing. We call this network a {\em partitionable network}.
\begin{definition}
The \textbf{network graph} at time $t$ is a subgraph of the initial graph obtained by deleting all nodes (and their incident links) that are crashed at time $t$.
\end{definition} We denote the network graph at time $t$ as $G(t)$. 


\section{Problem Statement}

We address the problem of implementing, with bounded-sized messages, an {\em Eventually Perfect} ($\Diamond P$) failure detector that satisfies the following on an arbitrary partitionable network $G$ composed of ADD channels. 
For each node $p$, there is a function from $p$'s state to the set of nodes that $p$ suspects.
 In every execution there exists a time $t_f$ such that for every $t > t_f$ and every correct node $p$,
 \begin{itemize}
 \item {\em Strong Completeness:} for every node $q$ that is disconnected from $p$ in $G(t)$, $p$ suspects $q$ at time $t$
 \item {\em Eventual strong accuracy:} for every node $q$ that is connected to $p$ in $G(t)$, $p$ does not suspect $q$ at time $t$.
 \end{itemize}

\section{$\Diamond P$ Algorithm}
\begin{algorithm}
\caption{{\sc $\Diamond$P:} Eventually Perfect Failure Detector, Code for node p}
\label{algo:Heartbeat}
\begin{multicols*}{2}
\begin{algorithmic}[1]

\myfontsize
\item[]{\bf Constants:}
\STATE $neighbors$ \COMMENT{list of neighbors of $p$.}
\STATE $T$ \COMMENT{integer;  time between successive heartbeats}

\item[] {\bf Variables:}
\STATE $clock()$ \COMMENT{local clock}
\STATE $last\_contact[.]$ \COMMENT{array of clock times for all neighbors, last time $p$ received a message about that node; initially $last\_contact[q] = 0$, for all $q\in neighbors$}
\STATE $suspect\_local[\cdot]$ \COMMENT{array of booleans for all  nodes; initially $suspect\_local[q]$ = false, for all $q\in \Pi$. This stores the failure information for nodes in $p$'s connected component}
\STATE $paths[\cdot]$ \COMMENT{Array of sets of paths or sequences of node ids. $paths[q]$ is the set of  paths taken by the heartbeat messages from $q$ to $p$; Initially $paths[p] = \{p\}$, $paths[q] =\{q\cdot p\}$  for $q\in neighbors$ and $paths[r] = \O $ for all others}
\STATE $suspect[\cdot]$ \COMMENT{array of booleans for all nodes; it is true for all nodes suspected to have failed; initially $suspect[q]$ = false, for all $q\in \Pi$.  This stores the failure information stored in $suspect\_local[\cdot]$ and also information derived from the $paths[\cdot]$ variable}

\STATE $timeout[\cdot]$ \COMMENT{Array of time-outs for all neighbors. $timeout[q]$ is the estimated maximum time between the  receipt of successive messages about neighbor $q$; initially $timeout[q]=T$, for all $q\in neighbors$}
\item[]
\hrulefill
\item[]

\STATE $\langle$ {\bf output} \textit{Send heartbeat to neighbors} $\rangle$:
\PRE
	\STATE $clock() = n\cdot T $ for $n \in \mathbb{N}$     \COMMENT{clock() is an integer multiple of $T$}\label{line:boundedPersistent}
\ENDPRE
\EFF
	\STATE   $send \langle suspect\_local[\cdot],paths[\cdot],p \rangle$ to every neighbor of $p$ \COMMENT{Send heartbeat with a copy of the local suspect list and the path list} \label{line:broadcast}
\ENDEFF

\item[]

\item[]

\STATE $\langle$ {\bf input} $recv \langle suspect\_rcv[\cdot],path\_sets[\cdot],q \rangle $ $\rangle$: 
\item[]
\EFF

	\IF[neighbor wrongfully suspected]{$suspect\_local[q] = true$}
    		\STATE $timeout[q] := 2 \cdot (clock()-last\_contact[q])$
    		\STATE $suspect\_local[q] := false$\COMMENT{Stop suspecting this neighbor}\label{line:stopSuspectingNeighbor}            
    	\ENDIF
    \STATE $last\_contact[q] := clock()$ 
   	
    	\item[]
    	\FOR{all $r\notin neighbors$}
    	 \STATE $hop\_from\_msg$ := Length of the shortest path in $path\_sets[r]$ not containing $p$ or a node $u$, $u\ne r$ with $suspect\_rcv[u] = true$ \label{line:hopEstimate}
    	 \STATE $hop$ := Length of the shortest path in $paths[r]$ not containing a node $u$, $u\ne r$ with $suspect\_local[u] = true$ \label{line:hopSelfEstimate}
    	 
         \IF{$hop\_from\_msg<hop $} \label{line:hopCheck}
        		 \STATE $suspect\_local[r] := suspect\_rcv[r] $\label{line:suspectUpdate}
         \ENDIF
         \FOR{each path, $\pi \in path\_sets[r]$ that does not contain $p$}\label{line:eliminateCycles}
           		\STATE $paths[r] := paths[r] \cup \{\pi\cdot p\} $ \COMMENT{Append new paths to the $paths[r]$ set}\label{line:addAllPaths}
         \ENDFOR
     \ENDFOR
    \item[]
 \item[]

	\STATE $suspect[\cdot] := suspect\_local[\cdot]$\label{line:derivedStart}
    \STATE Let $Sus$ be the set of all $u$ with $suspect\_local[u] = true$
    \FOR {all $r \notin neighbors$}
		 \IF{all paths in $paths[r]$ contain a $u\in Sus$ and $u\ne r$ }\label{line:allPathsHaveSus}
		    \STATE $suspect[r] := true$\label{line:deivedTrue}
		\ENDIF
	\ENDFOR\label{line:derivedEnd}
	 	\item[]
\ENDEFF	
\item[]		
\item[]

\STATE $\langle$ {\bf output} $timer\_expiry(q)$ $\rangle$: \COMMENT{The timer for a neighbor expires} \label{onlyNeighbors}
\PRE
	\STATE $timeout[q] = clock() - last\_contact[q]$
\ENDPRE
\EFF
	\STATE $suspect\_local[q] := true$  \label{endOnlyNeighbors}
\ENDEFF

\item[]
\end{algorithmic}
\end{multicols*}
\end{algorithm}


Algorithm~\ref{algo:Heartbeat} implements  $\Diamond P$ over the partitionable network of ADD channels. Every node $p$ maintains a variable $suspect\_local[\cdot]$ which is an array of booleans to store information about nodes in $p$'s connected component and a variable $paths[\cdot]$ which is an array of paths (sequences of node ids). For example, the  variable $paths[q]$ stores  the  paths between $p$ and $q$ that $p$ has learned about from messages from its neighbors. Node $p$ also maintains an array of booleans $suspect[\cdot]$ which stores failure information from $suspect\_local[\cdot]$ along with extra information derived from  $paths[\cdot]$. Its value is set to true for nodes estimated to have crashed (using failure information from $suspect\_local[\cdot]$)  or nodes that are estimated to be disconnected from $p$ (using information from  $paths[\cdot]$).  If $p$'s $suspect[q]$ variable is $true$, then we say $p$ {\em suspects} $q$.

Node $p$ sends out a heartbeat message  containing the variables  $suspect\_local[\cdot]$ and $paths[\cdot]$  to its neighbors every $T$ units of time. $T$ may be chosen differently by every node.  The smaller the value of $T$, the faster the failure detection algorithm will converge, but if $T$ is too small, the network may be over crowded with heartbeat messages.  
When $p$ receives a heartbeat message from a neighbor $q$, it records its current clock time ($clock()$) as the $last\_contact$ value.  If $q$ was wrongly suspected  as the $timeout[q]$ value was estimated to be too small, $p$ stops suspecting $q$ by setting $suspect\_local[q]$ to $false$ and increments its $timeout[q]$ value for $q$. Then, $p$ extracts information about the rest of the network from the message from $q$. For all nodes $r$ that are not neighbors of $p$, $p$ calculates $q$'s distance from $r$ and compares it to its own calculated distance from $r$. If $q$ is calculated to be nearer to $r$ than $p$, then $p$ adopts $q$'s information about $r$. Node $p$ goes through all the paths in the variable $path\_sets[\cdot]$ received in the message from $q$ and sees if $p$ is already included in those paths. If $p$ learns about any  path $\pi$, from $r$ to $q$ that does not include $p$, it adds the path $\pi\cdot p$ to its $paths[r]$ set.

 Then, $p$ updates its $suspect[\cdot]$ variables using information from the $paths[\cdot]$ variable about nodes that are not in $p$'s connected component. As there are no paths from these nodes to $p$,  information about these nodes is not received directly. Node $p$ checks if at least one path in the $paths[r]$ set has all nodes that have $suspect\_local[\cdot]$ set to  $false$. If not, it suspects $r$ (i.e., $suspect[r]$ is set to $ true$ ).

\section{Proof of Correctness}

To prove correctness, we need to show that the implementation in Algorithm~\ref{algo:Heartbeat}  satisfies
 \textbf{strong completeness} and \textbf{eventual strong accuracy}.
 We describe some  lemmas to prove that Algorithm~\ref{algo:Heartbeat} implements  $\Diamond P$.  
  
 Lemma~\ref{lemma:interArrivalTime} shows that there is an upper bound on the inter-arrival time of heartbeats at all correct nodes from correct neighbors. Lemma~\ref{lemma:timeoutsStopChanging} shows that eventually, the time-out estimates for neighbors stop changing. Lemmas~\ref{lemma:neighborsSuspect} and~\ref{lemma:neighborsNeverSuspect} show that eventually all correct nodes suspect crashed neighbors and stop suspecting correct neighbors. Lemma~\ref{lemma:pathsIsCorrect} shows that eventually the $paths[q]$ variable at a correct node $p$ contains all the paths between $p$ and $q$ in the final network graph. Lemma~\ref{lemma:eventualCorrectnessSamePartition}, along with Theorem~\ref{corollary:eventualStrongAccuracySamePartition}, proves eventual strong accuracy. Theorem~\ref{lemma:strongCompletenessDifferentPartition} proves strong completeness. Finally, Theorem~\ref{theorem:DiamondP} shows that Algorithm~\ref{algo:Heartbeat} implements  $\Diamond P$ using Theorems~\ref{corollary:eventualStrongAccuracySamePartition} and~\ref{lemma:strongCompletenessDifferentPartition}.  
 
 We use a subscript to denote which node a variable belongs to; for example $p$'s $suspect\_local[q]$ variable will be denoted as $suspect\_local_p[q]$. From here on we refer to nodes that are neighbors with respect to the initial network graph as {\em initial neighbors}. 

\begin{lemma} \label{lemma:interArrivalTime}
There exists an upper bound on the inter-arrival time of heartbeats for correct initial neighbors.
\end{lemma}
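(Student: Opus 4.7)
The plan is to combine the periodicity of $q$'s heartbeat sends (guaranteed because $q$ is correct and has a constant-rate local clock) with the ADD channel's privileged-message guarantee on the $q \to p$ channel.

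First, I would fix two correct initial neighbors $p$ and $q$. Because $q$ is correct, its local clock ticks forever at some constant (possibly unknown) real-time rate $\rho_q > 0$ ticks per unit of real time, so the precondition on line~\ref{line:boundedPersistent} fires every $T$ clock ticks. Hence the real times at which $q$ sends heartbeats to $p$ form an arithmetic progression with common difference $T / \rho_q$. Call these sends $m_1, m_2, \ldots$ in send order.

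Next, let $d$ and $r$ be the (unknown but finite) ADD constants of the $q \to p$ channel. By definition of an ADD channel, in every window of $r$ consecutive sends on this channel at least one is privileged and is delivered within $d$ real-time units of its send. Applying this to the subsequence $m_1, m_2, \ldots$, if $m_i$ and $m_j$ are two consecutive privileged heartbeats then $j - i \leq r$, so their send times differ by at most $r \cdot T / \rho_q$, and each of them arrives at $p$ within $d$ time of being sent. Therefore the real time between two consecutive privileged arrivals at $p$ is at most
\[
\Delta_{q,p} \;:=\; \frac{rT}{\rho_q} \,+\, d.
\]
Since unprivileged arrivals only shorten the gaps, $\Delta_{q,p}$ is also an upper bound on the general inter-arrival time of heartbeats from $q$ at $p$. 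The same argument applied to the first $r$ heartbeats shows that the delay until the first arrival is bounded by $\Delta_{q,p}$ as well. Taking the maximum of $\Delta_{q,p}$ over all (finitely many) ordered pairs of correct initial neighbors yields the uniform upper bound claimed.

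The main subtlety I would be careful about is that an ADD channel may reorder even privileged messages, so one cannot identify ``the $i$-th privileged arrival at $p$'' with ``the $i$-th privileged send by $q$'' and argue about arrival-order gaps directly. The bound above sidesteps this: it uses only send-order indices to bound the send-time gap between consecutive privileged sends, and then invokes the $d$-bound on each privileged send individually. This is essentially the only nontrivial step; the rest is just translating between $q$'s local clock and real time using the constant tick rate assumption on local clocks.
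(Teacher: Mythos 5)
Your proof is correct and takes essentially the same route as the paper's: periodic sends every $T$ combined with the ADD guarantee that at least one of every $r$ consecutive sends is privileged and delivered within $d$, giving a bound of the form $r\cdot T + d$ (the paper states it as $(r+1)\cdot T + d$ and absorbs the clock-rate factor). Your extra care about reordering of privileged messages and about converting $q$'s clock ticks to real time only makes the argument more precise than the paper's one-line version.
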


\begin{proof}
The properties of an ADD channel guarantee that at least one in every $r$ messages sent on a channel is privileged and thus is delivered within {\color{black}$d$}
 time. Thus the maximum time between the receipt of two consecutive heartbeat messages sent on Line~\ref{line:broadcast}  of Algorithm~\ref{algo:Heartbeat} at any neighbor $q$ of $p$ is {\color{black}$(r+1)\cdot T + d $} where $r$ and $d$ are the ADD channel parameters and $T$ is a constant in $p$'s algorithm.
 
 \end{proof}



Let $t^*$ be the time when all the failures have occurred. 
Let $t^{**} \geq t^{*}$ be the time when all messages (privileged and unprivileged) from all crashed nodes have been delivered or lost. We call the network graph after $t^*$ the {\em final network graph} and denote it by $\mathbb{G}$.

\begin{lemma} \label{lemma:timeoutsStopChanging}
 Eventually, all timeout[q] variables stop changing. 
\end{lemma}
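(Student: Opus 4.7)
The plan is to observe that the variable $timeout[q]$ is written in exactly one place in Algorithm~\ref{algo:Heartbeat}, namely the assignment $timeout[q] := 2 \cdot (clock() - last\_contact[q])$ inside the receive-handler branch guarded by $suspect\_local[q] = true$. So $timeout_p[q]$ changes only when $p$ receives a heartbeat from $q$ \emph{while currently wrongfully suspecting $q$}. It therefore suffices to show that at every correct node $p$ and for every initial neighbor $q$ of $p$, this branch is entered only finitely many times.

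First I would split into two cases based on whether $q$ is crashed or correct. If $q$ is crashed, then after time $t^{**}$ (when all messages from crashed nodes have been delivered or lost) node $p$ receives no further heartbeats purporting to be from $q$, so the receive handler for $q$ is never triggered again and $timeout_p[q]$ is frozen forever after $t^{**}$. If $q$ is correct, I would argue that each execution of the wrongful-suspicion branch at least doubles $timeout_p[q]$, so after finitely many such executions $timeout_p[q]$ exceeds the upper bound on inter-arrival times guaranteed by Lemma~\ref{lemma:interArrivalTime}, after which no further wrongful suspicion of $q$ is possible.

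The key subclaim is the doubling property. To set $suspect\_local_p[q]$ to $true$, the $timer\_expiry(q)$ event must have fired, whose precondition is $timeout_p[q] = clock() - last\_contact_p[q]$. From that firing until the next heartbeat from $q$ is received, $last\_contact_p[q]$ is not updated (the only update happens in the same receive handler), so at the moment of reception we have $clock() - last\_contact_p[q] \geq timeout_p[q]$, and hence the new value $2(clock() - last\_contact_p[q])$ is at least twice the old value. Combining: $timeout_p[q]$ is monotonically nondecreasing, each change at least doubles it, and once it exceeds the Lemma~\ref{lemma:interArrivalTime} upper bound $(r+1)T_q + d$ the timer can never expire on a live correct $q$ again.

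The main obstacle, which is really just a bookkeeping nuisance, is justifying $clock() - last\_contact_p[q] \geq timeout_p[q]$ at the moment of reception; this needs the observation that $last\_contact_p[q]$ is not modified between the timer firing and the subsequent receipt, so the quantity $clock() - last\_contact_p[q]$ only grows, and it already equaled $timeout_p[q]$ when the timer fired. Everything else is a direct appeal to Lemma~\ref{lemma:interArrivalTime} and the definition of $t^{**}$.
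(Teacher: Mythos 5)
Your proof is correct and follows essentially the same route as the paper's: a case split on whether $q$ is crashed (freeze after $t^{**}$) or correct (the growing timeout eventually exceeds the inter-arrival bound of Lemma~\ref{lemma:interArrivalTime}). You are in fact more explicit than the paper, which argues by contradiction and leaves the doubling mechanism and the $clock() - last\_contact_p[q] \geq timeout_p[q]$ bookkeeping implicit.
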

\begin{proof}
Let $q$ be an initial neighbor of $p$. 
Assume by contradiction that the $timeout_p[q]$ variable at node $p$ 
keeps changing forever. 

If $q$ is a correct node, 
this would mean that the inter-arrival times of messages from $q$  to $p$ keeps increasing and so does the $timeout_p[q]$ value. But by Lemma~\ref{lemma:interArrivalTime}, there is an upper bound on this value. Thus our assumption that $timeout_p[q]$  changes forever was false.

If $q$ is a crashed node,  
the number of messages sent to $p$ by $q$ is finite and they are eventually delivered to $p$ or lost. The $timeout_p[q]$ value may keep on increasing till $t^{**}$. After this, $p$ never receives a message from $q$ and the thus $timeout_p[q]$  stops changing after this. Thus our assumption that $timeout_p[q]$  changes forever was false.
\end{proof}

\begin{observation}\label{obs:suspectIsDerivedSuspect}
For a correct node $p$, the $suspect_p[q]$ variable for an 
initial neighbor 
$q$ is equal to  $suspect\_local_p[q]$ at all times. 
\end{observation}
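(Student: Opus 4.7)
The plan is to view $suspect_p[\cdot]$ as a function of $suspect\_local_p[\cdot]$ and $paths_p[\cdot]$, as made explicit in lines~\ref{line:derivedStart}--\ref{line:derivedEnd} of Algorithm~\ref{algo:Heartbeat}. The claim then reduces to noting that this function acts as the identity on coordinates indexed by the static list $neighbors$, of which $q$ is by assumption a member.

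Concretely, I would argue by induction over the events that can modify either variable. At initialization both arrays are uniformly $false$, so $suspect_p[q] = suspect\_local_p[q]$ holds trivially for every initial neighbor $q$. For the inductive step I would step through every algorithmic action that touches either variable. The only action that writes to $suspect_p[\cdot]$ is the receive handler, whose first update at line~\ref{line:derivedStart} is the wholesale assignment $suspect_p[\cdot] := suspect\_local_p[\cdot]$, which re-establishes equality on every coordinate simultaneously. The subsequent loop up to line~\ref{line:derivedEnd} only writes to entries $suspect_p[r]$ with $r \notin neighbors$, so the coordinate of any initial neighbor $q \in neighbors$ is left untouched; since $neighbors$ is a constant, this exclusion is permanent across executions of the handler.

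The one point requiring care is the $timer\_expiry(q)$ action, which can flip $suspect\_local_p[q]$ from $false$ to $true$ for an initial neighbor $q$ without any accompanying write to $suspect_p[q]$. The natural reading of the algorithm is that $suspect_p[\cdot]$ is the failure detector output, conceptually recomputed from $(suspect\_local_p[\cdot], paths_p[\cdot])$ via the rule in lines~\ref{line:derivedStart}--\ref{line:derivedEnd} whenever it is queried; under that reading the identity $suspect_p[q] = suspect\_local_p[q]$ for $q \in neighbors$ becomes a purely syntactic property of the derivation rule and therefore holds at all times. I do not foresee a substantive obstacle: the entire content of the observation is the visible code-level fact that the $paths$-based augmentation of $suspect\_local$ only ever modifies entries for non-neighbors. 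The main utility of stating it explicitly is to let subsequent lemmas (in particular the completeness/accuracy arguments for initial neighbors) refer only to $suspect\_local_p[q]$ when reasoning about $p$'s view of $q$.
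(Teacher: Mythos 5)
Your argument is correct and, in fact, more careful than the paper, which states this as an unproved observation and offers no argument at all. Your event-by-event induction is the natural formalization: initialization gives equality, the receive handler re-establishes it wholesale at line~\ref{line:derivedStart} and then only touches coordinates outside the constant set $neighbors$, so neighbor entries of $suspect_p[\cdot]$ and $suspect\_local_p[\cdot]$ coincide after every execution of that handler. More importantly, you have put your finger on the one real subtlety: under a literal reading of the code, $timer\_expiry(q)$ flips $suspect\_local_p[q]$ to $true$ while $suspect_p[q]$ retains its stale value until the next receive event, so the observation as stated (``at all times'') is false during that window --- and would be false forever if $p$ never receives another message (e.g., if $p$'s only initial neighbor crashes), which would also break the paper's use of the observation in Lemma~\ref{lemma:neighborsSuspect} and hence strong completeness. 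Your proposed repair --- treating $suspect_p[\cdot]$ as the output of the derivation rule in lines~\ref{line:derivedStart}--\ref{line:derivedEnd} recomputed at query time, under which the identity on neighbor coordinates is purely syntactic --- is exactly the reading the paper needs but never makes explicit; an equally acceptable alternative would be to add the assignment $suspect_p[q] := true$ to the effect of $timer\_expiry(q)$. Either way, your proof is sound modulo this interpretive choice, and you have correctly diagnosed why the choice must be made.
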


\begin{lemma}\label{lemma:neighborsSuspect}
There exists some time $t$ after which all correct initial neighbors 
 $p$  of  a crashed node $q$ in $\mathbb{G}$ 
suspect $q$. 
\end{lemma}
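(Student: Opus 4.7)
The plan is to track $p$'s local view of $q$ after $q$ has crashed and all of $q$'s in-flight messages have been resolved, and argue that the timer-expiry action must eventually fire and then stay fired permanently.

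First, I would fix notation. Let $q$ be a crashed node and let $p$ be a correct initial neighbor of $q$. By definition of $t^{**}$, no message from $q$ to $p$ is ever delivered after time $t^{**}$. Also, by Lemma~\ref{lemma:timeoutsStopChanging}, there is a time $t_1$ after which $timeout_p[q]$ is constant, equal to some finite value $T_0$. Set $t_2 := \max(t_1, t^{**})$.

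Next, I would observe that for an initial neighbor $q$, the variable $suspect\_local_p[q]$ is touched in only two places in Algorithm~\ref{algo:Heartbeat}: it is set to \emph{false} upon receipt of a heartbeat from $q$ (inside the ``if $suspect\_local[q] = true$'' branch of the \textbf{input} handler, which also updates $last\_contact_p[q]$), and it is set to \emph{true} by the $timer\_expiry(q)$ action (note that the for-loop over $r \notin neighbors$ does not affect neighbors like $q$). Likewise, $last\_contact_p[q]$ only changes upon receipt of a message from $q$. Since no message from $q$ to $p$ is delivered after $t^{**}$, after $t_2$ both $last\_contact_p[q]$ and $timeout_p[q]$ are constants; call them $L$ and $T_0$, respectively. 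Hence, once $suspect\_local_p[q]$ becomes \emph{true} at any time after $t_2$, there is no subsequent event that can reset it, so it remains \emph{true} forever.

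It remains to show that $suspect\_local_p[q]$ does become \emph{true} at some time after $t_2$. Since $p$ is correct its clock ticks at a constant rate and grows without bound, so there is a time $t \geq t_2$ at which $clock_p() - last\_contact_p[q] = clock_p() - L$ first reaches $T_0 = timeout_p[q]$. At this moment the precondition of the $timer\_expiry(q)$ action is satisfied, and by fairness of action enabling in this I/O-automaton-style algorithm the effect $suspect\_local_p[q] := true$ is executed. By the previous paragraph, $suspect\_local_p[q]$ remains \emph{true} from this point onward, and then Observation~\ref{obs:suspectIsDerivedSuspect} yields $suspect_p[q] = suspect\_local_p[q] = true$ for all later times, i.e.\ $p$ permanently suspects $q$.

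The only subtle point, and the one I would be most careful with, is ruling out spurious resets of $suspect\_local_p[q]$ after it first becomes \emph{true}; this is handled precisely by invoking $t^{**}$ (no late messages from $q$ can arrive) and by verifying that the only assignment to $suspect\_local[q]$ for a neighbor $q$ inside the receive handler is guarded by receipt of a message from $q$ itself. Everything else is a routine clock-advancement argument combined with Lemmas~\ref{lemma:interArrivalTime} and~\ref{lemma:timeoutsStopChanging}.
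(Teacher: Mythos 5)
Your proposal is correct and follows essentially the same route as the paper's proof: both rely on the fact that $suspect\_local_p[q]$ for an initial neighbor $q$ can only be reset by a message from $q$ itself, combine $t^{**}$ with Lemma~\ref{lemma:timeoutsStopChanging} to freeze $last\_contact_p[q]$ and $timeout_p[q]$, and then conclude that the timer expires and the suspicion is permanent, finishing via Observation~\ref{obs:suspectIsDerivedSuspect}. You simply spell out more explicitly the steps the paper compresses into ``by time $\max(t^f,t^{**})+timeout_p[q]$, $p$ sets $suspect\_local_p[q]$ to true permanently.''
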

\begin{proof}
Let  us assume that $q$ crashes at time $t_c $. 
From lines ~\ref{line:stopSuspectingNeighbor} and ~\ref{onlyNeighbors} - \ref{endOnlyNeighbors} of Algorithm ~\ref{algo:Heartbeat} we know that the $suspect\_local[\cdot] $ variables for initial neighbors 
are set only by the nodes themselves (i.e., nodes do not update information about their initial neighbors from other nodes). From Lemma~\ref{lemma:timeoutsStopChanging}, we know that by some time  $t^f$, the variable $timeout_p[q]$ stops changing. 
So, by time $t = (\max{(t^{f} ,t^{**})} + timeout_p[q]) $, $p$ sets $suspect\_local_p[q]$ to $true$ permanently. By Observation~\ref{obs:suspectIsDerivedSuspect}, we know that $suspect_p[q] = suspect\_local_p[q]$. Thus after $t$, $p$ suspects $q$. 
\end{proof}

\begin{lemma}\label{lemma:neighborsNeverSuspect}
There exists some  time $t$ after which all correct initial neighbors $p$  of  a correct node $q$  stop suspecting $q$. 
\end{lemma}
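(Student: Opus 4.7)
The plan is to combine Lemma~\ref{lemma:interArrivalTime} (bounded inter-arrival time from correct $q$) with Lemma~\ref{lemma:timeoutsStopChanging} (eventual stability of $timeout_p[q]$) and Observation~\ref{obs:suspectIsDerivedSuspect} (for initial neighbors, $suspect_p[q]$ is just $suspect\_local_p[q]$). Fix a correct initial neighbor $p$ of the correct node $q$, and let $B=(r+1)T_q+d$ be the upper bound on inter-arrival time of heartbeats from $q$ at $p$ guaranteed by Lemma~\ref{lemma:interArrivalTime}, where $T_q$ is $q$'s heartbeat period. By Lemma~\ref{lemma:timeoutsStopChanging}, there is a time $t_1$ after which $timeout_p[q]$ stops changing; call its final value $T_f$.

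The crux is to establish $T_f \ge B$. I would argue this by contradiction: suppose $T_f < B$. The only code path that sets $suspect\_local_p[q]$ to $true$ for an initial neighbor $q$ is the timer-expiry action, which fires when $clock()-last\_contact[q]=timeout_p[q]=T_f$. Since $q$ keeps sending heartbeats and $T_f<B$, after $t_1$ there must exist a receipt of a message from $q$ at some moment while $suspect\_local_p[q]=true$ (because the timer expires strictly before the next heartbeat can arrive in the worst case). When that receipt is processed, the algorithm executes $timeout_p[q]:=2\cdot(clock()-last\_contact_p[q])$, and since $clock()-last\_contact_p[q]>T_f$ at that point, the new timeout is strictly greater than $2T_f>T_f$. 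This contradicts the choice of $t_1$, so indeed $T_f \ge B$.

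With $T_f \ge B$ in hand, consider any time $t>t_1$. Once $timeout_p[q]$ has reached its final value $T_f\ge B$, the bounded inter-arrival time of heartbeats from $q$ guarantees that the next heartbeat from $q$ arrives at $p$ before $clock()-last\_contact_p[q]$ reaches $T_f$, so the timer-expiry precondition is never met again. Therefore, after $p$ processes the first heartbeat from $q$ received after $t_1$, the assignment on line~\ref{line:stopSuspectingNeighbor} (if triggered) sets $suspect\_local_p[q]:=false$, and from that point on no further event can set it back to $true$. Call this time $t$. Applying Observation~\ref{obs:suspectIsDerivedSuspect}, we get $suspect_p[q]=suspect\_local_p[q]=false$ for all times after $t$, which is exactly the claim.

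The main obstacle is the contradiction step showing $T_f\ge B$; everything else is bookkeeping on the algorithm's state transitions. The subtlety there is to notice that a spurious expiry forces the very next heartbeat receipt to re-assign $timeout_p[q]$ to a strictly larger value, which is what lets us push past Lemma~\ref{lemma:timeoutsStopChanging} to conclude the stable value is at least $B$ rather than just finite.
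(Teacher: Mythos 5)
Your overall route is the same as the paper's (combine Lemma~\ref{lemma:interArrivalTime}, Lemma~\ref{lemma:timeoutsStopChanging}, and Observation~\ref{obs:suspectIsDerivedSuspect}), and the doubling-contradiction idea you identify as the crux is indeed the right engine. But the specific intermediate claim you build the proof around, $T_f \ge B$, is both unprovable and unnecessary. The flaw is in the sentence ``after $t_1$ there must exist a receipt of a message from $q$ at some moment while $suspect\_local_p[q]=true$ (because the timer expires strictly before the next heartbeat can arrive in the worst case).'' Here $B=(r+1)T_q+d$ is only an \emph{upper} bound on inter-arrival times; nothing forces the worst case to be realized in a given execution. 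If heartbeats happen to arrive every $\epsilon \ll T_f$ time units, the timer never expires, $timeout_p[q]$ can stabilize at its initial value $T < B$, and the lemma still holds --- so $T_f \ge B$ is simply false in some executions, and your contradiction argument for it does not go through. Your subsequent step then leans on this false premise (``the bounded inter-arrival time \ldots guarantees that the next heartbeat arrives before $clock()-last\_contact_p[q]$ reaches $T_f$'').

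The repair is small: aim the same doubling contradiction at the right target. Suppose the timer for $q$ expires at some $t_e > t_1$, so $suspect\_local_p[q]$ becomes $true$ with $clock(t_e)-last\_contact_p[q]=\tau$. Since $q$ is correct, by Lemma~\ref{lemma:interArrivalTime} a further heartbeat from $q$ is received at some first $t_r>t_e$, at which point $suspect\_local_p[q]$ is still $true$ and the code assigns $timeout_p[q]:=2\cdot(clock(t_r)-last\_contact_p[q])\ge 2\tau>\tau$, contradicting that $timeout_p[q]$ stopped changing at $t_1$. Hence the timer never expires after $t_1$, and the rest of your bookkeeping (first post-$t_1$ receipt clears any standing suspicion; Observation~\ref{obs:suspectIsDerivedSuspect} transfers this to $suspect_p[q]$) is fine. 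For what it is worth, the paper's own proof is terse on exactly this point --- it asserts that after $t$, $p$ receives a message from $q$ within every $\tau$ time, which implicitly relies on the same contradiction you are reaching for --- so your instinct to make the step explicit was good; it just needs to be aimed at ``no expiry after $t_1$'' rather than at a comparison between $T_f$ and the worst-case bound $B$.
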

\begin{proof}
Lemma~\ref{lemma:interArrivalTime} states that there is an upper bound on the inter-arrival time of messages from $q$ to $p$. 
 Lemma~\ref{lemma:timeoutsStopChanging} states that $timeout_p[q]$ eventually stops changing. Let the time at which $timeout_p[q]$ stops changing be $t$ and the final value of $timeout_p[q]$ be $\tau$. Thus after $t$, $p$ receives a message from $q$ within every $\tau$ time and thus, $p$ never sets $suspect\_local_p[q]$ to $true$. By Observation~\ref{obs:suspectIsDerivedSuspect}, we know that $suspect_p[q] = suspect\_local_p[q]$, thus after $t$, $p$ stops suspecting $q$. 

\end{proof}

\begin{lemma}\label{lemma:pathsIsCorrect}
For all $p$ and $q$ such that $q$ is not an initial neighbor of $p$, 
eventually the  $paths_p[q]$ variable contains all the paths in $\mathbb{G}$ 
from $q$ to $p$ and $paths_p[q]$ stops changing.
\end{lemma}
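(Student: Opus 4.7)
The plan is to prove both parts of the lemma---that $paths_p[q]$ eventually contains every path in $\mathbb{G}$ from $q$ to $p$, and that $paths_p[q]$ stops changing---via a structural invariant paired with an induction on path length.

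First, I would establish the following invariant by induction on the steps of the execution: at every moment, every sequence in $paths_r[s]$ (for any nodes $r,s$) is a \emph{simple} path from $s$ to $r$ in the initial network graph. The base is immediate from the initial values $paths_r[r]=\{r\}$ and $paths_r[s]=\{s\cdot r\}$ for initial neighbors. For the inductive step, the only way a path is inserted into $paths_p[r]$ is through Line~\ref{line:addAllPaths}, which appends $p$ to a sequence $\pi\in path\_sets[r]$ coming from a neighbor $q$; by the hypothesis $\pi$ is a simple path from $r$ to $q$ in the initial graph, the guard on Line~\ref{line:eliminateCycles} forbids $p\in \pi$, and the receipt from $q$ guarantees the edge $pq$ exists, so $\pi\cdot p$ is a simple path from $r$ to $p$. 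Because the initial graph is finite it has only finitely many such simple paths, and no line of the algorithm ever removes an entry from $paths[\cdot]$, so each $paths_p[q]$ is monotone and bounded and therefore stabilizes after finitely many additions. This already yields the ``stops changing'' half of the lemma.

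Next, I would show by induction on $k$ that for every correct node $r$ and every simple path $\pi=q=u_0,u_1,\ldots,u_k=r$ in $\mathbb{G}$, there is a time after which $\pi \in paths_r[q]$. The cases $k=0$ and $k=1$ follow from the initialization of $paths_q[q]$ and of $paths_r[q]$ for initial neighbors. For $k\ge 2$, the prefix $u_0,\ldots,u_{k-1}$ is a simple path of length $k-1$ in $\mathbb{G}$ ending at the correct node $u_{k-1}$, so by the induction hypothesis there is a time after which it lies permanently in $paths_{u_{k-1}}[q]$ (permanence follows from the monotonicity established above). From that time on, $u_{k-1}$ includes the prefix in every heartbeat it sends to its correct neighbor $r=u_k$; by Lemma~\ref{lemma:interArrivalTime} some such heartbeat arrives at $r$ within bounded time. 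Because $\pi$ is simple, $r$ does not appear in the prefix, so the guard on Line~\ref{line:eliminateCycles} passes and Line~\ref{line:addAllPaths} inserts $\pi$ into $paths_r[q]$. Taking $r=p$ and ranging over the finitely many simple paths from $q$ to $p$ in $\mathbb{G}$, a single finite time suffices after which all of them lie in $paths_p[q]$; combined with Part~1 this gives the lemma.

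The main obstacle is making the induction in the second step go through cleanly. It hinges on two things that are easy to overlook: every internal node $u_i$ of the chosen path must be correct (guaranteed because $\pi$ lies in $\mathbb{G}$, so Lemma~\ref{lemma:interArrivalTime} applies at every hop), and the ``$p\notin\pi$'' guard on Line~\ref{line:eliminateCycles} must never block a legitimate insertion (guaranteed by the simplicity of $\pi$). The first part of the proof, while mostly bookkeeping, is essential for the second conclusion of the lemma---without the simple-path invariant one could worry that $paths_p[q]$ might accumulate arbitrarily long sequences by cycling through the initial graph and never converge.
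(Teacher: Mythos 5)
Your proof is correct in outline and, for the first half of the lemma (eventually acquiring every path of $\mathbb{G}$), it follows essentially the same route as the paper: an induction along the path, using Lemma~\ref{lemma:interArrivalTime} at each hop to guarantee delivery of the growing prefix, and the simplicity of the path to get past the guard on Line~\ref{line:eliminateCycles}. Where you genuinely diverge is the ``stops changing'' half. The paper argues that once $paths_p[q]$ holds all paths of $\mathbb{G}$, any newly learned path ``must be a cycle with $p$ in it already'' and is filtered by Line~\ref{line:eliminateCycles}; that claim is not obviously true, since a late-arriving path could avoid $p$ yet pass through a crashed node and so lie outside $\mathbb{G}$. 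Your argument --- an invariant that every stored sequence is a simple path of the \emph{initial} graph, plus monotonicity of $paths[\cdot]$ and finiteness of the set of simple paths --- sidesteps this entirely and is the tighter of the two; it buys stabilization without having to characterize exactly which paths arrive late. One caveat applies equally to your induction and to the paper's: the receive handler only executes Line~\ref{line:addAllPaths} for targets $r\notin neighbors$, so an intermediate node $u_{k-1}$ that happens to be an initial neighbor of $q$ never stores the length-$(k-1)$ prefix in $paths_{u_{k-1}}[q]$, and your inductive hypothesis (and the paper's corresponding step) fails for such nodes; neither proof addresses how the full path survives past such a chord, so this is a shared gap rather than one you introduced.
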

\begin{proof}
Let us assume that there exists a path $\pi$ in $\mathbb{G}$ 
 that is never included in $paths_p[q]$ variable at $p$. Let the length of this path be $k$ and the nodes in this path be $q\cdot q_1 \cdot  q_2 \cdot \cdots \cdot q_{k-1} \cdot p$. From Lemma~\ref{lemma:interArrivalTime}, we know that each node in this path receives a message from the previous node in this path infinitely often as all nodes in this path are correct. From line number 
\ref{line:addAllPaths}, we know that each of these paths is appended to the $paths_p[q_i]$ variable of each node $q_i\notin neighbors_p$. So, when $p$ gets a message from $q_{k-1}$ with the path $q\cdot q_1 \cdot  q_2 \cdot \cdots \cdot q_{k-1}$ in it, it adds the path $q\cdot q_1 \cdot  q_2 \cdot \cdots \cdot q_{k-1}\cdot p$ to its $paths_p[q]$ variable. 

Note that all $paths_p[q]$ end with $p$. After $paths_p[q]$ contains all the paths between $q$ and $p$ in  $\mathbb{G}$, if $p$ learns about a new path from $q$ to $p$, it must be a cycle with $p$ in it already. Line number~\ref{line:eliminateCycles} checks if $p$ is in the path already and if this is true, $p$  ignores it. Thus the value of $paths_p[q]$ stops changing once it learns about all paths between $q$ and $p$ in $\mathbb{G}$. 

\end{proof} 

A node $p$ has {\em perfect information} about node $q$ at time $t >t^*$, if any one of the following hold:
\begin{itemize}
 \item If $q$ is 
  in $p$'s connected component in $\mathbb{G}$, $suspect_p[q] = false$ at $t$.
 \item If $q$ is crashed and is an initial neighbor  
  of $p$'s connected component $C$ in $\mathbb{G}$, 
 $suspect\_local_p[q] = true$ at $t$ (This is because $suspect\_local_p$ stores information about $C$ and its crashed initial neighbors). Note that if $suspect\_local_p[q]$ is  $true$, then line numbers~\ref{line:derivedStart} to \ref{line:derivedEnd} imply that $suspect_p[q]$ is set to $true$ as well.
  \item If $q$ is not in $p$'s connected component in $\mathbb{G}$, $suspect_p[q] = true$ at $t$. 
 \end{itemize}
 

\begin{lemma}\label{lemma:eventualCorrectnessSamePartition}
For all $k$, there exists a time $t_k$ such that, for all $t \ge t_k$, all correct nodes $p$ in a connected component $C$ in $\mathbb{G}$, have perfect information about all  nodes $q$ that are either correct and in $C$ or are crashed and are initial neighbors 
 of $C$, 
at a distance at most $k$ from $p$ in $\mathbb{G}$. 
\end{lemma}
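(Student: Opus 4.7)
The plan is to prove the lemma by induction on $k$, strengthening the statement so that it holds simultaneously at \emph{every} correct node of $\mathbb{G}$. This strengthening is essential because the inductive step must invoke the hypothesis at the neighbors of $p$.

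For the base case $k=1$, nodes at distance at most $1$ from $p$ are its initial neighbors. Lemma~\ref{lemma:neighborsSuspect} supplies eventual suspicion of every crashed initial neighbor, Lemma~\ref{lemma:neighborsNeverSuspect} supplies eventual non-suspicion of every correct initial neighbor in $C$, and Observation~\ref{obs:suspectIsDerivedSuspect} transfers both conclusions from $suspect\_local_p$ to $suspect_p$. The maximum of the (finitely many) stabilization times over all correct nodes and all their initial neighbors defines a single $t_1$ that works uniformly.

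For the inductive step, assume the strengthened statement with time $t_k$ and fix a node $q$ at distance $k+1$ from $p$ of the kind covered by the lemma. In Case A ($q$ correct and in $C$), let $\pi_0 = q\cdot r_k\cdot r_{k-1}\cdots r_1\cdot p$ be a shortest $q$-$p$ path in $\mathbb{G}$, so every intermediate $r_i$ is correct, in $C$, and at distance at most $k$ from $p$. Lemma~\ref{lemma:pathsIsCorrect} ensures that $paths_p[q]$ eventually contains $\pi_0$ (and every other $q$-$p$ path in $\mathbb{G}$) and stops changing. By the induction hypothesis at $p$, after $t_k$ none of $r_1,\dots,r_k$ is suspected, so $\pi_0$ witnesses that the derivation block at lines~\ref{line:derivedStart}--\ref{line:derivedEnd} will not raise $suspect_p[q]$ to $true$ on account of the paths. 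Applying the hypothesis at the neighbor $r_1$ (where $q$ sits at distance $k$) gives $suspect\_local_{r_1}[q] = false$ after $t_k$, so every heartbeat $r_1\to p$ sent thereafter carries $suspect\_rcv[q] = false$ together with the path $q\cdot r_k\cdots r_1$ in $path\_sets[q]$ witnessing $hop\_from\_msg \le k$, while the stabilized $paths_p[q]$ gives $hop = k+1$. The test at line~\ref{line:hopCheck} then fires and assigns $suspect\_local_p[q] := false$; because the same argument applies to every neighbor of $p$, no later heartbeat can flip the value back, so $suspect_p[q]$ stays $false$.

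In Case B, $q$ is a crashed initial neighbor of $C$ at augmented distance $k+1$ from $p$, so some $r \in C$ with $q$ as an initial neighbor lies at distance $k$ from $p$. The induction hypothesis applied at $r$ (where $q$ is a crashed initial neighbor at distance $1$) yields $suspect\_local_r[q] = true$ after $t_k$. The entry $q\cdot r$ has been in $paths_r[q]$ since initialization and, by the propagation argument used in the proof of Lemma~\ref{lemma:pathsIsCorrect}, the length-$k$ continuation $q\cdot r\cdots r_1$ persists in the $path\_sets[q]$ carried by heartbeats along the $r$-$p$ path. A hop comparison identical to Case A then sets $suspect\_local_p[q] := true$, and invoking the hypothesis at every initial neighbor of $q$ in $C$ prevents any later reset. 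Defining $t_{k+1}$ as the maximum of $t_k$, the stabilization time of the relevant $paths$ variables, and one further round of heartbeats along paths of length at most $k+1$, completes the induction. The main difficulty is the permanence of the correct $suspect\_local_p[q]$ value in both cases: since the conditional update at line~\ref{line:hopCheck} can overwrite $suspect\_local_p[q]$ at any future reception, the argument must rule out that some neighbor ever combines a small enough $hop\_from\_msg$ with an incorrect $suspect\_rcv[q]$. This is precisely why the induction hypothesis must be invoked uniformly at every neighbor of $p$, and why the monotone growth together with the eventual stabilization of the $paths$ variables, guaranteed by Lemma~\ref{lemma:pathsIsCorrect}, is indispensable for controlling the hop counts on both sides of the inequality.
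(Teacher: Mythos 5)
Your overall strategy coincides with the paper's: induction on the distance $k$, base case supplied by Lemmas~\ref{lemma:neighborsSuspect} and~\ref{lemma:neighborsNeverSuspect}, and an inductive step driven by comparing $hop\_from\_msg$ against $hop$ so that $p$ adopts the (perfect, by the inductive hypothesis) opinion of a neighbor lying on a shortest path to $q$. Up to the point where the test at line~\ref{line:hopCheck} fires and $p$ acquires the correct value of $suspect\_local_p[q]$, your argument matches the paper's and is sound.

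The gap is in the permanence step, which you correctly flag as the main difficulty but do not actually carry out. In Case A you assert that ``the same argument applies to every neighbor of $p$, no later heartbeat can flip the value back,'' but the argument you gave for $r_1$ --- invoking the inductive hypothesis to get $suspect\_local_{r_1}[q] = false$ --- is unavailable for a neighbor $s$ of $p$ lying at distance $k+2$ from $q$ in $\mathbb{G}$: the level-$k$ hypothesis says nothing about $s$'s opinion of $q$, so $s$ may well send $suspect\_rcv[q] = true$. What must be shown (and what the paper's proof makes explicit) is that such an $s$ cannot simultaneously produce $hop\_from\_msg \le k$: any path of length at most $k$ from $q$ to $s$ in $paths_s[q]$ is absent from $\mathbb{G}$ and therefore contains a crashed node; taking $z$ to be the last such node on the way to $s$, $z$ is a crashed initial neighbor of $C$ within distance $k-1$ of $s$, so the inductive hypothesis applied to $s$ and $z$ gives $suspect\_rcv[z] = true$ and the path is filtered out on line~\ref{line:hopEstimate}. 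Hence $hop\_from\_msg \ge k+1 = hop$, the condition on line~\ref{line:hopCheck} fails, and no overwrite occurs. Your closing paragraph names exactly the right ingredients for this but stops short of assembling them, and the symmetric issue arises in Case B. A second, smaller omission: your $t_{k+1}$ must also wait until every heartbeat sent \emph{before} the neighbors acquired perfect information has been delivered or lost (the paper's $t^{\dagger}$); ``one further round of heartbeats'' does not bound this, since unprivileged messages on an ADD channel may be delayed arbitrarily long.
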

\begin{proof}
 We prove this lemma by strong induction on the distance  $k$ of node $p$ from $q$ in $\mathbb{G}$. 
\textit{Base case: $k =1$.} 
From Lemmas~\ref{lemma:neighborsSuspect} and \ref{lemma:neighborsNeverSuspect}, we know that there exists a time when all correct initial neighbors 
have perfect information about $q$. 
\\\textit{Inductive hypothesis:} 
Let us assume that all nodes that are at most $ k-1 $ hops away from $q$  in $\mathbb{G}$ have perfect information about $q$ after some time $t_{k-1}$. \\
\textit{Inductive step:} Let $p$ be a node at a distance $k$  from $q$ in $\mathbb{G}$. If $q$ is correct,  there exists a path $q\cdot \pi \cdot p$ from $q$ to $p$ (since $p,q\in C$). If $q$ is crashed, there exists a path $\pi\cdot p$ from an initial neighbor of $q$ to $p$ (since $q$ is an initial neighbor of $C$).  Let $r$ be a node on this path $k-1$ hops away from $q$ (note, $r$ is an initial neighbor of $p$). From the inductive hypothesis we know that $r$ has perfect information about $q$ after time $t_{k-1}$. 
Before time $t_{k-1}$, $r$ sent only a finite number of messages to $p$. Let $t^{\dagger}$ be the time when all  messages  sent before $t_{k-1}$ are either delivered to $p$  or lost. All messages that $r$ sends to $p$ after time $t_{k-1}$  have perfect information about $q$ in them. 

From Lemma~\ref{lemma:interArrivalTime}, we know that there is an upper bound on the time between two consecutive receive events from $r$ to $p$. Let this upper bound be $\tau$. By Lemma~\ref{lemma:pathsIsCorrect}, after some time $t_{a}$, all paths from $q$ to $r$ in $\mathbb{G}$ 
are appended to $paths_r[q]$ and sent to $p$ in the variable $path\_sets[q]$. So, after time $t' = (\max\{t_{k-1},t^{\dagger}, t_{a}\}+2\tau)$, all messages $p$ gets  from $r$ have perfect information about $q$ and contain all paths between $q$ and $r$ (in $\mathbb{G}$). 

When $p$ processes a message from $r$ after $t'$ we argue that the value of $hop\_from\_msg$ for $q$ on line number~\ref{line:hopEstimate} is $k-1$. 
 $hop\_from\_msg$ cannot be greater than $ k-1$ because 
 the variable $path\_sets[q]$  contains $q\cdot \pi$ which is of length $k-1$.  Also, $hop\_from\_msg$ cannot be less than $k-1$ because $r$ has perfect information about all nodes at a distance $k-1$ from $r$, and so, the estimate for $hop\_from\_msg$ will discard all paths with length less than $ k-1$ as they are no more available in $\mathbb{G}$. 
We also argue that the value of $hop$  for $q$ on line number~\ref{line:hopSelfEstimate} is greater than $ k-1$. This is because, by the inductive hypothesis, $p$ has perfect information about nodes that are $k-1$ hops away from $p$ in $\mathbb{G}$. So, all entries in $paths_p[r]$ with length at most $ k-1$ are discarded as they are correctly estimated to have a at least one crashed node in them. Thus, the value of $hop$  for $q$ is greater than $k-1$. 
 As a result, the `if' condition on line number~\ref{line:hopCheck} is satisfied and $p$ adopts $r$'s information about $q$. By the  inductive hypothesis, this information is perfect (note that $p$ adopts only $r$'s $suspect\_local_r[q]$ variable which currently contains perfect information about $q$). 

We still have to show that $p$'s  $suspect_p[q]$ variable does not get set to something different (from the one set by the message from $r$) 
by a message coming from a node with wrong information about $q$. Let us assume that by contradiction, $p$ gets a message from a node $s$ that is at a distance $i > k$ from $q$ and the  value of $hop\_from\_msg$ for $q$ on line number~\ref{line:hopEstimate} is miscalculated to be  at most $ k-1$. This scenario is possible only if there is a  path $\pi'$ in $paths_s[q]$ with $|\pi'| \leq k-1$ that is wrongly assumed to exist in $\mathbb{G}$. 
However, since $s$ is at a distance $i $ greater than $ k$ from $q$  in $\mathbb{G}$, $\pi'$ must have a crashed node in it. Let $z\in \pi'$ be the crashed node that $s$ has wrong information about. 
Since $z$ is less than $k$ hops away from $s$, by the induction hypothesis, $s$ already has perfect information about $z$. Thus the assumption that the value of $hop\_from\_msg$ for $q$ on line number~\ref{line:hopEstimate} is calculated to be at most $k-1$ is incorrect and $p$ permanently possesses perfect information about $q$.


\end{proof}
Theorem~\ref{corollary:eventualStrongAccuracySamePartition} proves eventual strong accuracy and Theorem~\ref{lemma:strongCompletenessDifferentPartition} proves strong completeness.

\begin{theorem}\label{corollary:eventualStrongAccuracySamePartition}
Eventually, two correct nodes $p$ and $q$ in the same connected component of $\mathbb{G}$  stop suspecting each other.
\end{theorem}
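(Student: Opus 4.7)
The plan is to derive this theorem as a direct corollary of Lemma~\ref{lemma:eventualCorrectnessSamePartition}, since that lemma already does the heavy lifting about perfect information propagation along paths of arbitrary length.

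First, I would fix two correct nodes $p$ and $q$ lying in the same connected component $C$ of the final network graph $\mathbb{G}$. Since $C$ is a connected subgraph of the finite initial network and both endpoints are correct, the distance between $p$ and $q$ in $\mathbb{G}$ is some finite value $k$. I would then invoke Lemma~\ref{lemma:eventualCorrectnessSamePartition} instantiated at this $k$: there exists a time $t_k$ such that for all $t \ge t_k$, every correct node in $C$ has perfect information about every correct node in $C$ at distance at most $k$. In particular, $p$ has perfect information about $q$ and, by symmetry of roles, $q$ has perfect information about $p$ after $t_k$.

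Next I would unpack the definition of perfect information in the case where $q$ lies in $p$'s connected component in $\mathbb{G}$: this definition directly asserts $suspect_p[q] = false$ at all such times $t \ge t_k$. Applying the same unpacking to $q$'s view of $p$ gives $suspect_q[p] = false$. Thus after time $t_k$ the two nodes permanently stop suspecting each other, which is exactly the conclusion of the theorem.

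I do not expect a real obstacle here; the only thing worth stating carefully is why $k$ is well defined and finite (connectivity of $C$ in $\mathbb{G}$ plus finiteness of $\Pi$), and why the symmetric statement for $q$'s view of $p$ is just an immediate relabeling of the lemma with $p$ and $q$ swapped. The proof will therefore be short, essentially one paragraph consisting of: fix $k := \mathrm{dist}_{\mathbb{G}}(p,q)$; apply the lemma; read off $suspect_p[q] = false$ and $suspect_q[p] = false$ from the first clause of the definition of perfect information.
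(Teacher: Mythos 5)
Your proposal is correct and takes essentially the same route as the paper, which simply states that the theorem follows directly from Lemma~\ref{lemma:eventualCorrectnessSamePartition}; you have merely spelled out the instantiation at $k = \mathrm{dist}_{\mathbb{G}}(p,q)$ and the unpacking of the first clause of the definition of perfect information. No gap here.
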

\begin{proof}
The proof is direct from Lemma~\ref{lemma:eventualCorrectnessSamePartition}.
\end{proof}

\begin{observation}\label{observation:pathPartition}
Let $C$ be a connected component in $\mathbb{G}$. 
Let $q$ be a node in $ \Pi-C$. For every path $\pi$ between $p\in C$ and $q$ in the original network graph, there exists a node $r$  such that  $r$ is a crashed initial neighbor 
 of  $C$.
\end{observation}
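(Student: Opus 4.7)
The plan is to walk along the path $\pi$ from $p$ and take the first node that leaves $C$, then argue that this node must be a crashed initial neighbor of $C$.

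More concretely, I would write $\pi$ as a sequence of nodes $p = v_0, v_1, \ldots, v_m = q$ in the original network graph. Since $v_0 = p \in C$ and $v_m = q \notin C$, there is a smallest index $i \ge 1$ with $v_i \notin C$; let $r = v_i$. Then $v_{i-1} \in C$ and $(v_{i-1}, r)$ is an edge of the original network graph, so $r$ is an initial neighbor of the node $v_{i-1} \in C$, which makes $r$ an initial neighbor of $C$.

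It remains to show $r$ is crashed. Here I would argue by contradiction: if $r$ were correct, then $r$ would still be a vertex of $\mathbb{G}$, and since $v_{i-1} \in C$ is also correct, the edge $(v_{i-1}, r)$ would still be present in $\mathbb{G}$. But then $r$ would be connected to $v_{i-1}$ in $\mathbb{G}$, hence $r \in C$, contradicting the choice of $r$. Therefore $r$ is crashed, and we have produced the desired crashed initial neighbor of $C$.

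There is essentially no technical obstacle here; the whole argument is just the standard "first vertex on a path that leaves a connected component" technique, together with the definition of $\mathbb{G}$ as the initial graph with crashed nodes deleted. The only point worth being careful about is to explicitly verify both conditions on $r$ (being an initial neighbor of $C$ and being crashed), and to note that $r$ does in fact lie on $\pi$, which is presumably the intended meaning of the observation and will be needed when it is applied in the proof of strong completeness.
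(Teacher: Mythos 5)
Your proof is correct: the paper states this as an Observation and gives no proof at all, and your argument (take the first vertex of $\pi$ that leaves $C$, then note that if it were correct the edge from its predecessor in $C$ would survive in $\mathbb{G}$ and force it into $C$) is exactly the standard reasoning the authors are implicitly relying on. Your closing remark is also well taken --- the observation as written omits the condition $r \in \pi$, which is clearly intended and is what the proof of strong completeness actually uses.
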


\begin{theorem}\label{lemma:strongCompletenessDifferentPartition}
Every correct node eventually suspects all nodes that are not in its connected component in $\mathbb{G}$.
\end{theorem}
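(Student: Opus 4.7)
Fix a correct node $p$ in a connected component $C$ of $\mathbb{G}$ and an arbitrary node $q\in\Pi-C$. The goal is to show that $suspect_p[q]$ is eventually permanently $true$. My plan is to case-split on whether $q$ is an initial neighbor of $p$ and, within the non-neighbor case, on whether $q$ is itself a crashed initial neighbor of $C$; the two routes by which $suspect_p[q]$ can become true (direct equality with $suspect\_local_p[q]$ for neighbors, and the derived check at line~\ref{line:allPathsHaveSus}) are invoked accordingly.

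\paragraph{Case 1: $q$ is an initial neighbor of $p$.}
Since $p$ is correct and $q\notin C$, the node $q$ must be crashed (otherwise the direct $p$--$q$ link would place $q$ in $C$). Lemma~\ref{lemma:neighborsSuspect} yields a time after which $suspect\_local_p[q]=true$ permanently, and Observation~\ref{obs:suspectIsDerivedSuspect} lifts this to $suspect_p[q]=true$.

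\paragraph{Case 2: $q\notin neighbors_p$.}
I would further split on whether $q$ is a crashed initial neighbor of $C$.
\emph{Subcase~2a:} $q$ is a crashed initial neighbor of $C$ (adjacent to some $c\in C$ with $c\neq p$). Then Lemma~\ref{lemma:eventualCorrectnessSamePartition}, applied to $q$ at distance $\le n$ from $p$, gives eventual perfect information, i.e.\ $suspect\_local_p[q]=true$ forever. The assignment $suspect[\cdot]:=suspect\_local[\cdot]$ at line~\ref{line:derivedStart} therefore makes $suspect_p[q]=true$ whenever a heartbeat is processed after that time.
\emph{Subcase~2b:} $q$ is not a crashed initial neighbor of $C$. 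Since every element of $paths_p[q]$ is built by iterated node-append from an initial neighbor of $q$, every stored path is a simple path in the original network graph from $q$ to $p$. By Observation~\ref{observation:pathPartition}, every such path passes through a crashed initial neighbor of $C$; because $q$ itself is not one, this witness $z$ satisfies $z\neq q$. Let $Z$ be the (finite) set of crashed initial neighbors of $C$. Applying Lemma~\ref{lemma:eventualCorrectnessSamePartition} to each $z\in Z$, there is a common time $t_1$ after which $Z\subseteq Sus_p$ holds permanently. After $t_1$, every path in $paths_p[q]$---including any newly learned one, since it too is a simple path in the original graph---still contains some $z\in Z\subseteq Sus_p$ with $z\neq q$, so the condition of line~\ref{line:allPathsHaveSus} is met and $suspect_p[q]$ is set (and remains) $true$.

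\paragraph{Main obstacle.}
The delicate point is Subcase~2a: the derived path check requires a witness $u\in Sus_p$ with $u\neq q$, and this can fail if the only crashed node on some path $c\cdot\ldots\cdot p$ in $paths_p[q]$ happens to be $q$ itself. One must therefore not try to set $suspect_p[q]$ via the path check in that subcase but rather rely on the $suspect\_local$ propagation already established by Lemma~\ref{lemma:eventualCorrectnessSamePartition}. A secondary care point is stability across time: since the witnesses live in the finite set $Z$ and $Z\subseteq Sus_p$ holds \emph{permanently} after $t_1$, the growth of $paths_p[q]$ and the possibility of future heartbeats cannot undo $suspect_p[q]=true$, because every newly appended path automatically inherits a witness in $Z$.
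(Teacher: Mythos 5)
Your proposal is correct and follows the paper's overall route (case split on whether $q$ is an initial neighbor of $p$; Observation~\ref{observation:pathPartition} plus Lemma~\ref{lemma:eventualCorrectnessSamePartition} feeding the derived check at line~\ref{line:allPathsHaveSus}), but your extra split into Subcases~2a/2b is a genuine refinement that the paper does not make, and it addresses a real subtlety. The paper's Case~2 argues that every path in $paths_p[q]$ contains a crashed initial neighbor $r$ of $C$ with $suspect\_local_p[r]=true$ and concludes that the condition of line~\ref{line:allPathsHaveSus} holds; but that condition explicitly requires the witness $u$ to satisfy $u\ne r$ (i.e., $u\ne q$ in your notation), and Observation~\ref{observation:pathPartition} does not guarantee the witness is distinct from $q$ when $q$ is itself a crashed initial neighbor of $C$. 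You correctly route that case through the $suspect\_local$ propagation guaranteed by Lemma~\ref{lemma:eventualCorrectnessSamePartition} and line~\ref{line:derivedStart} instead of the path check. A second small improvement: for permanence, the paper invokes Lemma~\ref{lemma:pathsIsCorrect} (stabilization of $paths_p[q]$), which is stated in terms of paths in $\mathbb{G}$ and is awkward to apply when $q$ is disconnected from $p$ and no such paths exist; your argument that every path ever stored is a simple path in the original graph and hence permanently inherits a witness from the finite set $Z\subseteq Sus_p$ sidesteps this and is cleaner. In short, your proof buys a tighter match with the actual guard of line~\ref{line:allPathsHaveSus} at the cost of one more case.
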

\begin{proof}
Let $C$ be a connected component in $\mathbb{G}$. We show that all $p\in C$ eventually suspect all $q \in \Pi - C$. Since originally the network was a connected graph, there was a path from all $q$ to $p$ in the initial network graph. We separate this proof into two parts:
\begin{itemize}
\item $q$ is an initial neighbor of $p$. 
In this case, the proof is direct from Lemma~\ref{lemma:neighborsSuspect}.

\item $q$ is not an  initial neighbor of $p$. 
Let $\pi$ be a path in $paths_p[q]$. From Observation~\ref{observation:pathPartition}, we know that all paths from $q$ to $p$ have a crashed node $r$ that is an initial neighbor of a node in $C$.  From 
 Lemma~\ref{lemma:eventualCorrectnessSamePartition}, we know that after some time $t$, $p$ has perfect information about  $r$. 
Thus, all $\pi \in paths_p[q]$ have a node $r$ that has $suspect\_local_p[r] = true$. When $p$ calculates the $suspect$ variable for $q$ on line numbers~\ref{line:derivedStart} to \ref{line:derivedEnd}, the if condition on line number~\ref{line:allPathsHaveSus} is satisfied and $suspect_p[q]$ is set to $true$ on line number~\ref{line:deivedTrue}.

Now we show that this value of the $suspect$ variable is not reversed. Since from Lemma~\ref{lemma:pathsIsCorrect}, we know that $paths_p[q]$ stops changing and the information about all nodes $r$ is never reversed, we can safely conclude that line number~\ref{line:allPathsHaveSus} is always satisfied henceforth and  $suspect_p[q]$ always remains $true$.
\end{itemize}

\end{proof}

\begin{theorem}\label{theorem:DiamondP}
Algorithm~\ref{algo:Heartbeat} implements an eventually perfect failure detector for partitionable networks  using bounded size messages.
\end{theorem}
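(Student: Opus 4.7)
The plan is to assemble Theorem~\ref{theorem:DiamondP} from the two previously proved theorems, verifying that the conjunction of their conclusions matches the definition of $\Diamond P$ from the Problem Statement, and then separately justifying the bounded-message-size claim, which is the only piece that has not been explicitly treated so far.

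First I would recall the problem statement: $\Diamond P$ on a partitionable network requires a time $t_f$ after which every correct $p$ (i) suspects every $q$ disconnected from it in $G(t)$ (strong completeness) and (ii) does not suspect any $q$ connected to it in $G(t)$ (eventual strong accuracy). For any time $t > t^{*}$ we have $G(t) = \mathbb{G}$, so ``disconnected from $p$ in $G(t)$'' and ``connected to $p$ in $G(t)$'' are equivalent to the corresponding statements in $\mathbb{G}$. Theorem~\ref{corollary:eventualStrongAccuracySamePartition} gives a time after which any two correct nodes in the same connected component of $\mathbb{G}$ stop suspecting each other, which is exactly eventual strong accuracy; Theorem~\ref{lemma:strongCompletenessDifferentPartition} gives a time after which every correct node permanently suspects every node outside its connected component in $\mathbb{G}$, which is exactly strong completeness. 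Taking $t_f$ to be the maximum of the two times (and of $t^{*}$) yields the required stabilization time, so the failure-detector guarantees of $\Diamond P$ hold.

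The remaining obligation is the bounded-size guarantee on the heartbeat message $\langle suspect\_local[\cdot], paths[\cdot], p \rangle$ sent on Line~\ref{line:broadcast}. The $suspect\_local[\cdot]$ field is an array of $n$ booleans plus a node id, so its size is $O(n \log n)$. For $paths[\cdot]$ the key observation, which I would prove by induction on the number of receive events in any execution, is that every path ever stored in any $paths_x[r]$ is a \emph{simple} path (has no repeated nodes). The base is immediate: initially $paths_p[p] = \{p\}$ and $paths_p[q] = \{q \cdot p\}$ for $q \in neighbors$, all simple. For the inductive step, the only way a new path enters $paths_p[r]$ is via Line~\ref{line:addAllPaths}, where $p$ appends itself to some $\pi \in path\_sets[r]$ that by Line~\ref{line:eliminateCycles} does not contain $p$; by the induction hypothesis $\pi$ is simple, hence $\pi \cdot p$ is simple. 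Since every simple path over $\Pi$ has length at most $n$ and the number of simple paths between any two nodes is at most $\sum_{k=0}^{n-2} (n-2)!/(n-2-k)! < n!$, each $paths_p[r]$ has size bounded by a function of $n$ alone, independent of the execution length, and therefore the whole heartbeat message has bounded size.

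The main obstacle in this proof is exactly the bounded-size argument, because the other two properties are immediate citations. The subtlety in the simple-path invariant is that Line~\ref{line:eliminateCycles} only rejects paths containing the \emph{receiver}, not all previously-seen nodes on the path; the invariant nonetheless holds because every intermediate node on a received path must itself have applied the same rejection rule when it appended itself, so simplicity is preserved end-to-end by induction on path construction rather than by a single check at any one node. Once this invariant is in hand, the theorem follows by combining the bound on $|paths_p[r]|$ with Theorems~\ref{corollary:eventualStrongAccuracySamePartition} and~\ref{lemma:strongCompletenessDifferentPartition}.
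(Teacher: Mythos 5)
Your proposal is correct and follows essentially the same route as the paper: cite Theorems~\ref{corollary:eventualStrongAccuracySamePartition} and~\ref{lemma:strongCompletenessDifferentPartition} for eventual strong accuracy and strong completeness, then bound the message size by bounding $suspect\_local[\cdot]$ and $paths[\cdot]$. The only difference is that you actually prove, by induction on events, the claim that every stored path is simple (which is what makes $paths[\cdot]$ bounded), whereas the paper simply asserts that $paths[\cdot]$ ``contains only simple paths''; your added argument is sound and fills in that gap.
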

\begin{proof}
This proof is direct from Theorems~\ref{corollary:eventualStrongAccuracySamePartition} and ~\ref{lemma:strongCompletenessDifferentPartition} which prove eventual strong accuracy and  strong completeness respectively.

The messages sent by a node $p\in \Pi$ have the variables $suspect\_local[\cdot]$ and $paths[\cdot]$ in them. Note that both these variables are bounded in size. The $suspect\_local[\cdot]$ variable has $n$ booleans and so has size $ n$ bits.  The $paths[\cdot]$ variable contains only simple paths between nodes which can be at most $O(n \cdot n!)$ bits (in the case of a complete graph) and so has message size at most $O((n +1)!)$ bits. Thus, the messages used in this algorithm are bounded in size. 
\end{proof}


\section{Conclusion}
We have implemented the eventually perfect failure detector ($\Diamond P$) in a weak, arbitrary, partitionable network model composed of unreliable, partially synchronous ADD channel with unbounded message loss and unbounded message delay for a majority of the messages. This work is an important step towards understanding the minimal assumptions on network topology, message sizes, reliability of channels and partial synchrony necessary to implement this oracle. The algorithm is quite practical for sparsely connected graphs as the number of paths between two nodes (and the message size) will be $\ll$ $(n+1)!$. Even though the message size for this algorithm is bounded, can we do better than our current results using smaller messages or fewer messages? We think that these are important questions that need to be answered in the future.

\end{document}